\documentclass[11pt]{article}
\usepackage{url}
\topmargin=-5mm
\evensidemargin=0cm
\oddsidemargin=0cm
\textwidth=16cm
\textheight=22cm

\usepackage{amssymb,amscd,amstext,amsmath, amsthm}
\usepackage{latexsym}
\usepackage{multirow}

\renewenvironment{abstract}{%
\begin{center}\begin{minipage}{0.9\textwidth}\begin{small}
\textbf{Abstract.}}
{\end{small}\par\noindent\end{minipage}\end{center}}

\newtheorem{theorem}{Theorem}

\newtheorem{corollary}{Corollary}

\newtheorem{proposition}{Proposition}

\newtheorem{conjecture}{Conjecture}

\newtheorem{remark}{Remark}

\newtheorem*{proofbody}{Proof}
\renewenvironment{proof}{\begin{proofbody}\normalfont}{\hfill \qed \end{proofbody}}

\title{\bf A note on the properties of associated Boolean functions of quadratic APN functions
}

\author{A.~Gorodilova
    \\
  \\
  {\small Sobolev Institute of Mathematics, Novosibirsk, Russia} \\
  {\small Novosibirsk State University, Novosibirsk, Russia}\\
    \\
    {\small E-mail: {\tt gorodilova@math.nsc.ru}}
}

\date{}

\begin{document}

\pagenumbering{arabic}

\maketitle
\begin{abstract}
Let $F$ be a quadratic APN function of $n$ variables. The associated Boolean function $\gamma_F$ in $2n$ variables ($\gamma_F(a,b)=1$ if $a\neq{\bf 0}$ and equation $F(x)+F(x+a)=b$ has solutions) has the form $\gamma_F(a,b) = \Phi_F(a) \cdot b + \varphi_F(a) + 1$ for appropriate functions $\Phi_F:\mathbb{F}_2^n\to \mathbb{F}_2^n$ and $\varphi_F:\mathbb{F}_2^n\to \mathbb{F}_2$. We summarize the known results and prove new ones regarding properties of $\Phi_F$ and $\varphi_F$. For instance, we prove that degree of $\Phi_F$ is either $n$ or less or equal to $n-2$. Based on computation experiments, we formulate a conjecture that degree of any component function of $\Phi_F$ is $n-2$. We show that this conjecture is based on two other conjectures of independent interest.
\vspace{0.2cm}

\noindent \textbf{Keywords.} A quadratic APN function, the associated Boolean function, degree of a function.
\end{abstract}

\bigskip

\section{Introduction}

Let $\mathbb{F}_2^n$ be the $n$-dimensional vector space over $\mathbb{F}_2$. Let ${\bf 0}$ denote the zero vector of $\mathbb{F}_2^n$ and ${\bf 1}$ denote the vector with all 1s. By $+$ we denote the coordinate-wise sum modulo 2 for vectors from~$\mathbb{F}_2^n$. Let $x \cdot y = x_1y_1 + \ldots + x_ny_n$ denote the {\it inner product} of vectors $x = (x_1,\ldots,x_n),y = (y_1,\ldots,y_n)\in\mathbb{F}_2^n$; $x\preceq y$ if $x_i\leqslant y_i$  for all $i=1,\ldots,n$; and ${\rm wt}(x) = \sum_{i=1}^nx_i$ denote the {\it Hamming weight} of $x\in\mathbb{F}_2^n$. A set $M\subseteq\mathbb{F}_2^n$ forms a {\it linear subspace} if $x+y\in M$ for any $x,y\in M$; the {\it dimension} of $M$, ${\rm dim}(M)$, is the maximal number of linearly independent over $\mathbb{F}_2$ vectors from $M$.
We consider {\it  vectorial Boolean functions} $F:\mathbb{F}_2^n\to \mathbb{F}_2^m$, $F = (f_1, \ldots, f_m)$, where $f_i:\mathbb{F}_2^n\to \mathbb{F}_2$, $i=1,\ldots,m$, is a {\it coordinate} function of $F$.
The {\it algebraic normal form} (ANF) of $F$ is the following unique representation:
$F(x) = \sum_{I\in \mathcal{P}(N)} a_{I} \big(\prod_{i\in I} x_i \big),$
where $\mathcal{P}(N)$ is the power set of $N=\{1,\ldots,n\}$ and $a_I\in\mathbb{F}_2^m$.
The {\it algebraic degree} of $F$ is degree of its ANF: ${\deg}(F) = \max\{|I|:\ a_I\neq{\bf 0},\ I\in\mathcal{P}(N)\}$. Function of algebraic degree at most 1 are called {\it affine} (they are {\it linear} in case of $F({\bf 0}) = {\bf 0}$). Functions of algebraic degree 2 are called {\it quadratic}.
The {\it Walsh transform} $W_f:\mathbb{F}_2^n\to\mathbb{Z}$ of a Boolean function $f:\mathbb{F}_2^n\to\mathbb{F}_2$ is defined as $W_f(u) = \sum_{x\in\mathbb{F}_2^n} (-1)^{f(x)+u\cdot x}$. For $F$ the {\it Walsh spectrum} consists of all {\it Walsh coefficients} $W_{F_v}(u)$, $u\in\mathbb{F}_2^n$, $v\in\mathbb{F}_2^m$, $v\neq{\bf 0}$, where $F_v = v\cdot F$ is a {\it component} Boolean function of $F$.

A function $F$ from $\mathbb{F}_2^n$ to itself is called {\it almost perfect nonlinear} (APN) (according to K.~Nyberg\cite{94-nyberg}) if for any $a,b\in\mathbb{F}_2^n$, $a\neq{\bf 0}$, equation $F(x) + F(x+a) = b$ has at most 2 solutions. APN functions are of special interest for using as S-boxes in block ciphers due to their optimal differential characteristics. Despite the fact that APN functions are intensively studied (see, for example, the book~\cite{RefBud-14} of L.~Budaghyan, surveys \cite{RefP-16} of A.~Pott, \cite{RefGl-16} of M.\,M. Glukhov, \cite{RefT-09} of M.\,E.~Tuzhilin), there are a lot of open problems on finding new constructions, classifications, etc.

In \cite{98-ccz} C.~Carlet, P.~Charpin and V.~Zinoviev introduced the {\it associated Boolean function} $\gamma_F:\mathbb{F}_2^{2n}\to\mathbb{F}_2$ for a given vectorial Boolean function $F$ from $\mathbb{F}_2^n$ to itself; $\gamma_F(a,b)=1$ if and only if $a\neq{\bf 0}$ and equation $F(x)+F(x+a)=b$ has solutions.

Two functions are called {\it differentially equivalent} \cite{19-gorodilova} (or $\gamma$-equivalent according to K.~Boura et al \cite{19-boura}) if their associated Boolean functions coincide. The problem of describing the differential equivalence class of an APN function remains open even for quadratic case. That is why we are interested in obtaining some properties of $\gamma_F$. We will focus on quadratic APN functions.

Let $F$ be a quadratic APN function. Then the set $B_a(F) = \{F(x) + F(x+a)\ |\ x\in\mathbb{F}_2^n\}$ is a linear subspace of dimension $n-1$ or its complement for a nonzero $a\in\mathbb{F}_2^n$. Using this fact, $\gamma_F$ can be uniquely represented in the form $$\gamma_F(a,b) = \Phi_F(a) \cdot b + \varphi_F(a) + 1,$$
where $\Phi_F:\mathbb{F}_2^n\to \mathbb{F}_2^n$, $\varphi_F:\mathbb{F}_2^n\to \mathbb{F}_2$ are defined from
$
B_a(F) = \{y\in\mathbb{F}_2^n:\ \Phi_F(a)\cdot y = \varphi_F(a)\}
$
for all $a\neq {\bf 0}$; and $\Phi_F({\bf 0}) = {\bf 0}$, $\varphi_F({\bf 0})=1$. Note that $B_a(F)$ is a linear subspace if and only if $\varphi_F(a) = 0$. It is easy to see that
$(F(x) + F(x+a) + F(a) + F({\bf 0})) \cdot \Phi_F(a) = 0$ for all $x\in\mathbb{F}_2^n$ by definition.

 In this note we study the properties of functions $\Phi_F$ and $\varphi_F$.

\bigskip
\section{Properties of $\varphi_F$ and $\Phi_F$}

In this section we summarize known results and present new ones about properties of $\Phi_F$ and $\varphi_F$. As it usually happens the cases of even and odd number of variables are different.

\bigskip
\subsection{The image set of $\Phi_F$}

According to \cite{16-Gor-spectr}, let us denote $A_v^F = \{a\in\mathbb{F}_2^n\ |\ \Phi_F(a) = v\}$.

\begin{theorem}[\cite{98-ccz, 16-Gor-spectr}]\label{image}
Let $F$ be a quadratic APN function of $n$ variables.
\begin{enumerate}
\item If $n$ is odd, then $\Phi_F$ is a permutation.
\item If $n$ is even, then the preimage $\Phi_F$ of any nonzero vector is a linear subspace of even dimension together with the zero vector.
\end{enumerate}
\end{theorem}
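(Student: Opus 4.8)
The plan is to recognise, for each nonzero $v\in\mathbb{F}_2^n$, the set $A_v^F\cup\{{\bf 0}\}$ as the radical (linear kernel) of the polar form of the quadratic component function $F_v=v\cdot F$, and then to read both statements off from the fact that an alternating bilinear form over $\mathbb{F}_2$ has even rank, together with an elementary count.

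First I would record the relevant biadditive form. Put $B(a,x)=F(x)+F(x+a)+F(a)+F({\bf 0})$. Because $F$ is quadratic, $B(a,\cdot)$ is an $\mathbb{F}_2$-linear map, $B$ is symmetric and biadditive, and $B(a,a)={\bf 0}$; because $F$ is APN, each derivative $x\mapsto F(x)+F(x+a)$ with $a\neq{\bf 0}$ is two-to-one, so $\ker B(a,\cdot)=\{{\bf 0},a\}$ and ${\rm Im}\,B(a,\cdot)$ has dimension $n-1$. Since $B_a(F)$ is the translate of the subspace ${\rm Im}\,B(a,\cdot)$ by $F(a)+F({\bf 0})$ and, simultaneously, equals the affine hyperplane $\{y:\Phi_F(a)\cdot y=\varphi_F(a)\}$, comparing the two descriptions and using that ${\rm Im}\,B(a,\cdot)$ contains ${\bf 0}$ pins it down: ${\rm Im}\,B(a,\cdot)=\{y\in\mathbb{F}_2^n:\Phi_F(a)\cdot y=0\}$ (and $\varphi_F(a)=\Phi_F(a)\cdot(F(a)+F({\bf 0}))$, recovering the identity quoted in the introduction). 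Consequently, for a fixed nonzero $v$, an element $a$ belongs to the radical $R_v=\{a:v\cdot B(a,x)=0\ \text{for all}\ x\in\mathbb{F}_2^n\}$ of the polar form of $F_v$ precisely when $a={\bf 0}$ or $\{y:\Phi_F(a)\cdot y=0\}=\{y:v\cdot y=0\}$, i.e. $\Phi_F(a)=v$. Hence $R_v=A_v^F\cup\{{\bf 0}\}$, which is in particular a linear subspace.

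Item~2 then follows immediately: the polar form of the quadratic Boolean function $F_v$ is alternating over $\mathbb{F}_2$, so its rank $n-\dim R_v$ is even, and therefore $\dim\big(A_v^F\cup\{{\bf 0}\}\big)=\dim R_v\equiv n\pmod 2$ is even when $n$ is even. For item~1 the same congruence shows $\dim R_v$ is odd, hence $\geqslant 1$, so $A_v^F\neq\emptyset$ for every one of the $2^n-1$ nonzero vectors $v$. Since $\Phi_F$ sends nonzero vectors to nonzero vectors, the sets $A_v^F$ partition the $(2^n-1)$-element set $\mathbb{F}_2^n\setminus\{{\bf 0}\}$ into $2^n-1$ nonempty blocks, which forces $|A_v^F|=1$ for all $v\neq{\bf 0}$. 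Thus $\Phi_F$ is injective, and being a self-map of $\mathbb{F}_2^n$ with $\Phi_F({\bf 0})={\bf 0}$ it is a permutation.

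I do not foresee a genuine obstacle: the heart of the argument is the identification $A_v^F\cup\{{\bf 0}\}=R_v$, after which everything reduces to the even rank of an alternating form together with the count above. The one step that needs care is passing from ``$B_a(F)$ is an affine hyperplane with normal vector $\Phi_F(a)$'' to ``${\rm Im}\,B(a,\cdot)=\{y:\Phi_F(a)\cdot y=0\}$'', which uses that the image of the linear map $B(a,\cdot)$ is a linear subspace and hence passes through ${\bf 0}$.
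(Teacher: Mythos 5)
Your proof is correct and complete. Note that the paper does not prove this theorem at all --- it is quoted from \cite{98-ccz} (odd $n$) and \cite{16-Gor-spectr} (even $n$) --- so there is no in-text argument to compare against; your route, identifying $A_v^F\cup\{{\bf 0}\}$ with the radical of the alternating polar form $v\cdot B(a,x)$ of the component $F_v$ and invoking the even rank of an alternating form over $\mathbb{F}_2$ plus the counting argument $\sum_v|A_v^F|=2^n-1$, is the standard one and is in the spirit of the cited sources. The only step you state without justification is that $\Phi_F(a)\neq{\bf 0}$ for $a\neq{\bf 0}$; it is immediate (if $\Phi_F(a)={\bf 0}$ then $\{y:\Phi_F(a)\cdot y=\varphi_F(a)\}$ is empty or all of $\mathbb{F}_2^n$, whereas $B_a(F)$ has $2^{n-1}$ elements), but worth a line.
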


Note that theorem \ref{image}\,(1) means also that $\gamma_F$ is a bent function of Maiorana–McFarland type (readers may find details regarding bent functions in \cite{15-tokareva}).

\begin{corollary}
Let $F$ be a quadratic APN function. Then $\Phi_F$ takes an odd number of distinct nonzero values.
\end{corollary}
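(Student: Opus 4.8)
The plan is to count the nonzero vectors of the domain in two ways via the fibers $A_v^F$ of $\Phi_F$. The key point is that every fiber of $\Phi_F$ over a value actually attained is forced to have \emph{odd} cardinality, and the total count $2^n-1$ is odd, so the number of fibers must be odd.

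First I would check that $\Phi_F$ sends no nonzero vector to ${\bf 0}$. Indeed, if $\Phi_F(a) = {\bf 0}$ for some $a \neq {\bf 0}$, then $B_a(F) = \{y\in\mathbb{F}_2^n :\ {\bf 0}\cdot y = \varphi_F(a)\}$ equals either all of $\mathbb{F}_2^n$ (if $\varphi_F(a)=0$) or $\emptyset$ (if $\varphi_F(a)=1$), which contradicts the fact that $B_a(F)$ is a linear subspace of dimension $n-1$ or the complement of one. Hence $A_{\bf 0}^F = \{{\bf 0}\}$, and the fibers $A_v^F$ over the distinct \emph{nonzero} values $v$ taken by $\Phi_F$ form a partition of $\mathbb{F}_2^n\setminus\{{\bf 0}\}$.

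Next I would invoke Theorem~\ref{image} to pin down the parity of $|A_v^F|$ for each such $v$. If $n$ is odd, then $\Phi_F$ is a permutation, so $|A_v^F| = 1$ for every $v$ in the image. If $n$ is even and $v$ lies in the image, then $A_v^F \cup \{{\bf 0}\}$ is a linear subspace of even dimension $d$, with $d \geqslant 2$ since $A_v^F \neq \emptyset$; thus $|A_v^F| = 2^d - 1$, which is odd. In either case every fiber occurring in the partition has odd size. Summing over the partition, $\sum_v |A_v^F| = 2^n - 1$, which is odd; since a sum of odd integers has the parity of the number of summands, the number of distinct nonzero values of $\Phi_F$ is odd.

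I do not expect a genuine obstacle here; the only thing requiring care is the bookkeeping with Theorem~\ref{image}(2): one must read ``linear subspace of even dimension together with the zero vector'' as the statement that $A_v^F \cup \{{\bf 0}\}$ has even dimension (so that deleting ${\bf 0}$ leaves an odd number of elements), and one must separately confirm, as above, that ${\bf 0}$ is the unique $\Phi_F$-preimage of ${\bf 0}$, so that the $2^n-1$ nonzero vectors are indexed precisely by the distinct nonzero values of $\Phi_F$.
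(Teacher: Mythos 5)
Your proposal is correct and follows essentially the same argument as the paper: partition $\mathbb{F}_2^n\setminus\{{\bf 0}\}$ into the fibers $A_v^F$, observe via Theorem~\ref{image} that each has odd cardinality ($1$ for odd $n$, $2^{\lambda}-1$ with $\lambda$ even for even $n$), and conclude from the oddness of $2^n-1$ that the number of fibers is odd. Your explicit check that ${\bf 0}$ has no nonzero preimage is a detail the paper leaves implicit, but it does not change the route.
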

\begin{proof}
By definition of $\Phi_F$, we have $\Phi_F({\bf 0}) = {\bf 0}$.

If $n$ is odd, then $\Phi_F$ is a permutation~\cite{98-ccz}. Hence, the proposition holds.

Let $n$ be even. It is known~\cite{16-Gor-spectr} that the preimage set $A_v^F = \{x\in\mathbb{F}_2^n\ |\ \Phi_F(x) = v\}$ for any nonzero $v\in\mathbb{F}_2^n$ represents a linear subspace of even dimension together with the zero vector. Let $\Phi_F\in\{{\bf 0}, v_1, \ldots, v_m\}$, where $v_i$, $i=1,\ldots,m$, are pairwise distinct nonzero vectors. We need to prove that $m$ is odd.

We have that
$$
2^n - 1 = |A_{v_1}^F| + \ldots + |A_{v_m}^F| = 2^{\lambda_1} - 1 + \ldots + 2^{\lambda_m} - 1 = 2^{\lambda_1} + \ldots + 2^{\lambda_m} - m,
$$
where $\lambda_i$, $i=1,\ldots,m$, is a nonzero even number. Since $2^n-1$ is odd, then $m$ is also odd.
\end{proof}

\bigskip
\subsection{The degree of $\varphi_F$}
\begin{proposition}\label{deg_phi}
Let $F$ be a quadratic APN function of $n$ variables, $n$ is even. Then $\deg(\varphi_F) = n$, or, equivalently, ${\rm wt}(\varphi_F)$ is odd.
\end{proposition}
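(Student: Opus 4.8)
The key observation is that $\mathrm{wt}(\varphi_F) \bmod 2$ can be computed as $\sum_{a \in \mathbb{F}_2^n} \varphi_F(a)$, and by definition $\varphi_F(\mathbf{0}) = 1$ while for $a \neq \mathbf{0}$ we have $\varphi_F(a) = 1$ exactly when $B_a(F)$ is the complement of a hyperplane (i.e. $\mathbf{0} \notin B_a(F)$). So the parity of $\mathrm{wt}(\varphi_F)$ is $1 + \#\{a \neq \mathbf{0} : \mathbf{0} \notin B_a(F)\} \bmod 2$. Now $\mathbf{0} \in B_a(F)$ iff there is $x$ with $F(x) + F(x+a) = \mathbf{0}$; since $F$ is APN this equation has $0$ or $2$ solutions, and $x=0,a$ would be the pair if $F(a) = F(\mathbf{0})$... more usefully: $\mathbf{0} \in B_a(F)$ iff $F$ is not injective on some coset, equivalently iff $a$ lies in the (difference) set of collisions. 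The plan is to count, over all nonzero $a$, how many have $\mathbf{0} \in B_a(F)$, and show this count has a fixed parity.

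The cleanest route uses the structure already recalled in the excerpt: for quadratic $F$, $F(x)+F(x+a)+F(a)+F(\mathbf{0}) =: L_a(x)$ is linear in $x$ (the associated bilinear form), and $B_a(F) = L_a(\mathbb{F}_2^n) + F(a) + F(\mathbf{0})$ is a coset of the image hyperplane $H_a := L_a(\mathbb{F}_2^n)$. Thus $\varphi_F(a) = 0$ iff $F(a)+F(\mathbf{0}) \in H_a$. I would fix, WLOG (differential properties are translation-invariant), $F(\mathbf{0}) = \mathbf{0}$, so $\varphi_F(a) = 0 \iff F(a) \in H_a \iff F(a) = L_a(x)$ for some $x \iff F(x) + F(x+a) = \mathbf{0}$ has a solution $\iff \mathbf{0} \in B_a(F)$. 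So I want the parity of $N := \#\{a \neq \mathbf{0} : F(a) \in H_a\}$, and the claim is $\mathrm{wt}(\varphi_F) \equiv 1 + N \pmod 2$ should be odd, i.e. $N$ is even.

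To show $N$ is even, I would count solutions $(x,a)$ with $a \neq \mathbf{0}$ and $F(x)+F(x+a) = \mathbf{0}$, i.e. $F(x) = F(x+a)$. By APN-ness, for each such $a$ the solution set in $x$ has size exactly $2$ when nonempty (it's an affine line $\{x_0, x_0+a\}$), so the number of such pairs $(x,a)$ is $2N$ — automatically even, which gives nothing. Instead, the right move is a dimension/rank parity argument on the linear maps $L_a$: in even dimension $n$, $\Phi_F(a)$ is the nonzero normal vector to $H_a$, and one should relate $F(a) \in H_a$ to $\Phi_F(a)\cdot F(a) = 0$, i.e. $\varphi_F(a) = \Phi_F(a) \cdot F(a)$ when $F(\mathbf{0})=\mathbf{0}$. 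Then $\sum_{a} \varphi_F(a) = \varphi_F(\mathbf{0}) + \sum_{a\neq\mathbf{0}} \Phi_F(a)\cdot F(a) = 1 + \sum_{a} \Phi_F(a)\cdot F(a)$ (the $a=\mathbf{0}$ term vanishes since $\Phi_F(\mathbf{0})=\mathbf{0}$), so it suffices to prove $\sum_{a\in\mathbb{F}_2^n} \Phi_F(a)\cdot F(a) = 0$. I expect this to follow from Theorem~\ref{image}\,(2): the preimages $A_v^F$ are, up to $\mathbf{0}$, even-dimensional subspaces, and restricted to such a subspace a quadratic function... the sum $\sum_{a \in A_v^F} v\cdot F(a)$ should collapse to $0$ because $v \cdot F$ is quadratic (or constant) on the even-dimensional space $A_v^F \cup \{\mathbf{0}\}$, and the sum of any quadratic Boolean function over an even-dimensional space is even. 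Summing over the (odd number of) classes $v$ then gives $0$.

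\textbf{Main obstacle.} The delicate point is the identity $\varphi_F(a) = \Phi_F(a)\cdot F(a)$ (for $F(\mathbf 0)=\mathbf 0$) and then the parity evaluation of $\sum_{a\in A_v^F} v\cdot F(a)$: I need that $v\cdot F$, restricted to the subspace $A_v^F\cup\{\mathbf 0\}$, has the property that its sum over that subspace is $0$, which is clear if its restriction has degree $\neq \dim(A_v^F\cup\{\mathbf 0\})$ — true here since the restriction is quadratic and the dimension is even and, one must check, not equal to $2$ in a problematic way; in fact a quadratic function on a space of dimension $\geq 1$ always sums to an even number unless the dimension is $\le 2$ and the function has degree equal to the dimension, so the genuinely even-dimensional case ($\dim \geq 2$) needs the sub-case $\dim = 2$ handled — there one argues directly that $v\cdot F$ cannot have degree $2$ on $A_v^F$ (else $\gamma_F$ would fail to have the stated affine-in-$b$ form, or one invokes APN-ness on that plane). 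Pinning down this last sub-case cleanly is where the real work lies; everything else is bookkeeping with the already-quoted structural facts.
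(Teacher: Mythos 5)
Your route is essentially the paper's: the paper also decomposes $\mathrm{wt}(\varphi_F)$ over the preimage classes $A_v^F$, uses that $\varphi_F$ restricted to each class is a linear function on the even-dimensional subspace $A_v^F\cup\{\mathbf 0\}$ (hence has even weight there), and adds $1$ for $\varphi_F(\mathbf 0)$. Your identity $\varphi_F(a)=\Phi_F(a)\cdot F(a)$ for $F(\mathbf 0)=\mathbf 0$ (correct: take $x=\mathbf 0$ in the definition of $B_a(F)$) is exactly what makes $\varphi_F|_{A_v^F}=v\cdot F|_{A_v^F}$, which is the cited fact $\varphi_F|_{A_v^F}=c_v\cdot x|_{A_v^F}$ in disguise. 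The one point you leave open --- that $v\cdot F$ might have degree $2$ on a class of dimension $2$ --- is not actually an obstacle, and closes in one line from the identity $(F(x)+F(x+a)+F(a)+F(\mathbf 0))\cdot\Phi_F(a)=0$ already stated in the introduction: for $a_1,a_2\in A_v^F$, putting $a=a_1$, $x=a_2$ gives $v\cdot\bigl(F(a_1)+F(a_2)+F(a_1+a_2)\bigr)=0$, so $v\cdot F$ is \emph{linear} on $A_v^F\cup\{\mathbf 0\}$ (no quadratic part survives, in any dimension), and its weight on that even-dimensional subspace is $0$ or $2^{\dim-1}$, which is even since $\dim\geq 2$. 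With that observation inserted, your proof is complete and coincides with the paper's in substance; the only genuine difference is that you rederive the linearity of $\varphi_F$ on the classes rather than citing it.
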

\begin{proof}
It is known \cite{16-Gor-spectr} that $A_v^F\cup\{{\bf 0}\}$ is a linear subspace of even dimension if $n$ is even for any nonzero $v\in\mathbb{F}_2^n$. Also \cite{16-Gor-spectr}, there exists $c_v\in\mathbb{F}_2^n$ such that $\varphi_F|_{A_v^F} = c_v\cdot x|_{A_v^F}$.
Hence, ${\rm wt}(\varphi_F|_{A_v^F})$ is an even number equal to 0 or $2^{{\rm dim}(A_v^F\cup\{{\bf 0}\}) - 1}$ for any nonzero $v$ and $\varphi_F({\bf 0}) = 1$ by definition.
Thus, ${\rm wt}(\varphi_F)$ is odd. It is widely known that ${\rm wt}(f)$ is odd if and only if $\deg(f) = n$ for any Boolean function of $n$ variables.
\end{proof}

The case of odd $n$ remains open. Based on our computational experiments for all known quadratic APN functions of not more than 11 variables, we can formulate the following
\begin{conjecture}
Let $F$ be a quadratic APN function of $n$ variables, $n$ is odd. Then $\deg(\varphi_F) < n$, or, equivalently, ${\rm wt}(\varphi_F)$ is even.
\end{conjecture}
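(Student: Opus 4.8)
\bigskip
\noindent\textbf{A proof proposal.}
Since ${\rm wt}(f)$ is odd if and only if $\deg(f)=n$ for a Boolean function of $n$ variables, the statement is equivalent to ${\rm wt}(\varphi_F)$ being even, and I would prove it by rewriting ${\rm wt}(\varphi_F)$ as a count of collisions of $F$. For $a\neq{\bf 0}$ one has $\varphi_F(a)=0\iff B_a(F)$ is a linear subspace $\iff{\bf 0}\in B_a(F)\iff F(x)=F(x+a)$ has a solution; when it does, the APN property forces the equation $F(x)+F(x+a)={\bf 0}$ to have exactly the two solutions $x_0,x_0+a$, i.e. exactly one unordered pair $\{x_0,x_0+a\}$ on which $F$ collides, and two distinct such pairs cannot have a common difference $a$ (that would yield four solutions of $F(x)+F(x+a)={\bf 0}$). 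Hence $a\mapsto\{x_0,x_0+a\}$ is a bijection from $\{a\neq{\bf 0}:\varphi_F(a)=0\}$ onto the set of unordered pairs $\{x,y\}$ with $x\neq y$ and $F(x)=F(y)$; writing $C$ for its cardinality and using $\varphi_F({\bf 0})=1$,
$$
{\rm wt}(\varphi_F)=1+\bigl(2^n-1-C\bigr)=2^n-C ,
$$
so it suffices to show that $C$ is even.

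Next I would pass to the Walsh transform. Expanding characters,
$$
\sum_{v\neq{\bf 0}}W_{F_v}({\bf 0})^2=\sum_{x,y}\bigl(2^n[F(x)=F(y)]-1\bigr)=2^n\sum_{c\in\mathbb{F}_2^n}|F^{-1}(c)|^2-2^{2n},
$$
and $\sum_c|F^{-1}(c)|^2=2^n+2C$, so $C=2^{-(n+1)}\sum_{v\neq{\bf 0}}W_{F_v}({\bf 0})^2$. Here I use that $F$ is quadratic and $n$ odd: each $F_v$, $v\neq{\bf 0}$, is a quadratic Boolean function, hence plateaued, so whenever $W_{F_v}(u)\neq 0$ one has $|W_{F_v}(u)|=2^{n-r_v/2}$ where $r_v$ is the (even) rank of the associated bilinear form, and $r_v\leqslant n-1$ because $n$ is odd. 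Therefore $\varepsilon_v:=W_{F_v}({\bf 0})/2^{(n+1)/2}$ is an integer for every $v\neq{\bf 0}$ and $C=\sum_{v\neq{\bf 0}}\varepsilon_v^2$, so $C\equiv\sum_{v\neq{\bf 0}}\varepsilon_v\pmod 2$ since $\varepsilon^2\equiv\varepsilon\pmod 2$ for integers. Finally, a one-line character sum gives $\sum_{v\neq{\bf 0}}W_{F_v}({\bf 0})=\sum_x\bigl(2^n[F(x)={\bf 0}]-1\bigr)=2^n\bigl(|F^{-1}({\bf 0})|-1\bigr)$, hence
$$
\sum_{v\neq{\bf 0}}\varepsilon_v=2^{(n-1)/2}\bigl(|F^{-1}({\bf 0})|-1\bigr),
$$
which is even for every odd $n\geqslant 3$. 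Thus $C$ is even, ${\rm wt}(\varphi_F)=2^n-C$ is even, and $\deg(\varphi_F)<n$.

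I do not expect a genuine obstacle in the odd case: the only ingredients are the collision bijection of the first step (where the APN hypothesis is used, through ``at most one collision per direction'') and the plateauedness of quadratic components, both standard. Oddness of $n$ enters solely through the bound $r_v\leqslant n-1$, which is exactly what makes the exponent $(n-1)/2$ positive and $\sum_v\varepsilon_v$ even; for even $n$ the identical computation yields $\sum_v\varepsilon_v=2^{n/2}\bigl(|F^{-1}({\bf 0})|-1\bigr)$ and only $C=\tfrac12\sum_v\varepsilon_v^2$, so the parity of $C$ is no longer forced --- consistent with Proposition~\ref{deg_phi} requiring a different argument for even $n$. The one thing to handle carefully is keeping the normalisation $2^{(n\pm1)/2}$ straight throughout.
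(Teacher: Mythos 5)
Be aware that the statement you are proving is stated in the paper only as a \emph{conjecture}, supported by computations for $n\le 11$; the paper contains no proof for odd $n$ (its Proposition~\ref{deg_phi} handles the even case by a completely different argument, via the structure of the preimage sets $A_v^F$). So there is nothing to compare your argument to --- and, having checked it step by step, I believe your argument is a correct and complete proof of the conjecture. The collision count is right: for $a\neq{\bf 0}$ one has $\varphi_F(a)=0$ iff ${\bf 0}\in B_a(F)$ (take $y={\bf 0}$ in $B_a(F)=\{y:\Phi_F(a)\cdot y=\varphi_F(a)\}$), the APN property gives exactly one unordered collision pair per such $a$, the pair determines $a$ as the sum of its two elements, so indeed ${\rm wt}(\varphi_F)=2^n-C$ with $C$ the number of unordered pairs $\{x,y\}$, $x\ne y$, $F(x)=F(y)$. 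The character-sum identity $\sum_{v\neq{\bf 0}}W_{F_v}({\bf 0})^2=2^n(2^n+2C)-2^{2n}=2^{n+1}C$ is correct, and the divisibility $2^{(n+1)/2}\mid W_{F_v}({\bf 0})$ holds for every $v\neq{\bf 0}$: a quadratic component has Walsh values in $\{0,\pm 2^{n-r_v/2}\}$ with $r_v$ the even rank of its bilinear form, so $r_v\le n-1$ for odd $n$; and even in the degenerate case of an affine component (which in fact cannot occur here, since it would force $\Phi_F(a)=v$ for all $a\ne{\bf 0}$, contradicting Theorem~\ref{image}(1)) the value is $0$ or $\pm2^n$, still divisible. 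Hence $C=\sum_{v\ne{\bf 0}}\varepsilon_v^2\equiv\sum_{v\ne{\bf 0}}\varepsilon_v=2^{(n-1)/2}\bigl(|F^{-1}({\bf 0})|-1\bigr)\equiv 0\pmod 2$ for odd $n\ge 3$, and ${\rm wt}(\varphi_F)=2^n-C$ is even.

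Two small sanity checks support this: for $F(x)=x^3+x$ on $\mathbb{F}_{2^3}$ (quadratic APN, not a permutation) one gets $C=4$ and ${\rm wt}(\varphi_F)=4$; for $F(x)=x^3$ on $\mathbb{F}_{2^4}$ one gets $C=15$ and ${\rm wt}(\varphi_F)=1$, matching $\varphi_F\equiv tr(1)=0$ off ${\bf 0}$ and consistent with your observation that the parity argument is exactly what breaks for even $n$ (there $\varepsilon_v=W_{F_v}({\bf 0})/2^{n/2}$ and $C=\tfrac12\sum\varepsilon_v^2$, so no parity conclusion follows --- in agreement with Proposition~\ref{deg_phi}, which forces $C$ odd). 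The only points worth writing out carefully in a final version are (i) the justification that every component's Walsh value at ${\bf 0}$ is divisible by $2^{(n+1)/2}$, including the degenerate ranks, and (ii) the explicit restriction $n\ge 3$, without which $2^{(n-1)/2}$ need not be even. Modulo that bookkeeping, this settles Conjecture~1 of the paper and you should consider writing it up as such.
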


\bigskip
\subsection{The degree of $\Phi_F$}

\begin{theorem}[\cite{19-gorodilova}]\label{degree-odd}
Let $F$ be a quadratic APN function of $n$ variables, $n\geq3$, $n$ is odd. Then $\deg(\Phi_F)\leq n-2$.
\end{theorem}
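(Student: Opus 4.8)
The plan is to bound the algebraic degree of each coordinate function $(\Phi_F)_j$ of $\Phi_F$ separately, the engine being a Cramer‑type formula that expresses $a_k\,(\Phi_F)_j(a)$ as a small determinantal minor of an auxiliary matrix whose entries are homogeneous linear forms in $a$. It is worth noting at the outset why one has to work a little: the ``cheap'' observations only give $\deg(\Phi_F)\le n-1$. Indeed, for $n$ odd $\Phi_F$ is a permutation (Theorem~\ref{image}\,(1)), so each $(\Phi_F)_j$ is balanced of weight $2^{n-1}$, hence of degree at most $n-1$; equivalently, $\gamma_F$ is bent in $2n$ variables and, since $\gamma_F(a,b)=\sum_j(\Phi_F)_j(a)b_j+\varphi_F(a)+1$ has no cancellation among the $b_j$‑blocks, $\deg(\Phi_F)\le\deg(\gamma_F)-1\le n-1$. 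Squeezing out the last unit is the real content.

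First I would set up the linear algebra. Since $F$ is quadratic, its derivative has the form $F(x)+F(x+a)=L(a)\,x+\big(F(a)+F({\bf 0})\big)$, where $L(a)$ is the $n\times n$ matrix over $\mathbb{F}_2$ whose entries are the constant‑free linear forms in $a$ coming from the bilinear part of $F$. The APN property forces $\ker\big(x\mapsto L(a)x\big)=\{{\bf 0},a\}$ for $a\ne{\bf 0}$, hence ${\rm rank}\,L(a)=n-1$; and from the defining relation $B_a(F)=\{y:\Phi_F(a)\cdot y=\varphi_F(a)\}$ one reads that the image of $x\mapsto L(a)x$ is the hyperplane $\Phi_F(a)^{\perp}$, so $\Phi_F(a)$ spans the one‑dimensional left kernel of $L(a)$.

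The key step is a determinantal identity. Because $\det L(a)=0$ for every $a$ (for $a={\bf 0}$ one has $L({\bf 0})=0$), the adjugate relations give $L(a)\,{\rm adj}\,L(a)={\rm adj}\,L(a)\,L(a)=0$ identically. For $a\ne{\bf 0}$ the adjugate of a corank‑$1$ matrix has rank exactly $1$; its columns lie in $\ker L(a)=\{{\bf 0},a\}$ and its rows in the left kernel $\{{\bf 0},\Phi_F(a)^{\top}\}$, which forces ${\rm adj}\,L(a)=a\,\Phi_F(a)^{\top}$ (both sides also vanish at $a={\bf 0}$). Taking the $(k,j)$ entry, $a_k\,(\Phi_F)_j(a)$ equals — up to a sign irrelevant over $\mathbb{F}_2$ — the $(n-1)\times(n-1)$ minor of $L(a)$ obtained by deleting row $j$ and column $k$. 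As a determinant of an $(n-1)$‑square matrix of linear forms, that minor is a polynomial of degree at most $n-1$ in $a_1,\dots,a_n$, hence a Boolean function of degree at most $n-1$, hence of even Hamming weight.

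It then remains to assemble the bound by Möbius inversion: $\deg\big((\Phi_F)_j\big)\le n-2$ is equivalent to the vanishing of the ANF coefficient of $(\Phi_F)_j$ on every monomial of degree $n-1$ or $n$. The coefficient on $a_1\cdots a_n$ is ${\rm wt}\big((\Phi_F)_j\big)$ modulo $2$, which is $0$ because $(\Phi_F)_j$ is balanced of weight $2^{n-1}$. The coefficient on $\prod_{i\ne k}a_i$ equals $\sum_{a:\,a_k=0}(\Phi_F)_j(a)={\rm wt}\big((\Phi_F)_j\big)-\sum_{a:\,a_k=1}(\Phi_F)_j(a)={\rm wt}\big((\Phi_F)_j\big)-{\rm wt}\big(a\mapsto a_k(\Phi_F)_j(a)\big)\equiv 0-0\pmod 2$ by the previous paragraph. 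Hence $\deg\big((\Phi_F)_j\big)\le n-2$ for all $j$, that is, $\deg(\Phi_F)\le n-2$. The main obstacle I anticipate is establishing the determinantal identity ${\rm adj}\,L(a)=a\,\Phi_F(a)^{\top}$ carefully — chiefly verifying that $\Phi_F(a)$ really is the left‑kernel generator of $L(a)$ and dealing with $a={\bf 0}$; once that identity is in hand, the rest is a short parity count.
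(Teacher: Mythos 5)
Your proof is correct, and it is a genuinely self-contained argument: the paper itself states Theorem~\ref{degree-odd} only as an imported result from \cite{19-gorodilova} and gives no proof, so you are not merely reproducing an omitted computation. The load-bearing steps all check out. Writing $F(x)+F(x+a)=L(a)x+F(a)+F({\bf 0})$ with $L(a)$ a matrix of homogeneous linear forms is exactly the bilinear decomposition of a quadratic map; APN-ness gives $\ker L(a)=\{{\bf 0},a\}$ and hence ${\rm rank}\,L(a)=n-1$ for $a\neq{\bf 0}$; the identity $(F(x)+F(x+a)+F(a)+F({\bf 0}))\cdot\Phi_F(a)=0$ recorded in the introduction is precisely the statement that ${\rm Im}\,L(a)=\Phi_F(a)^{\perp}$, so $\Phi_F(a)$ spans the left kernel; and for a corank-one matrix the adjugate is nonzero of rank one with columns in the kernel and rows in the left kernel, which over $\mathbb{F}_2$ pins down ${\rm adj}\,L(a)=a\,\Phi_F(a)^{\top}$ with no scalar ambiguity. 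The resulting fact that $a\mapsto a_k(\Phi_F)_j(a)$ is a determinant of $(n-1)$ linear forms, hence of degree at most $n-1$ and of even weight, combined with balancedness of the coordinates (Theorem~\ref{image}\,(1), the only place odd $n$ is used), kills all ANF coefficients of weight $n$ and $n-1$ via the M\"obius formula. Two bonuses of your route worth noting: first, the adjugate identity holds for any quadratic APN function regardless of parity, so your computation also shows that for even $n$ every degree-$(n-1)$ ANF coefficient of $(\Phi_F)_j$ is congruent to ${\rm wt}((\Phi_F)_j)$ and hence equals the degree-$n$ coefficient --- which is exactly the ``all top monomials or none'' dichotomy of Theorem~\ref{degree-even}, obtained there by a different argument through $\varphi_{F+L}$; second, the identity ${\rm adj}\,L(a)=a\,\Phi_F(a)^{\top}$ gives an explicit polynomial formula for $\Phi_F$ that could be of independent use for the conjectures in Section~3.
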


The following theorem contains a similar bound for even $n$.

\begin{theorem}\label{degree-even}
Let $F$ be a quadratic APN function in $n$ variables, $n\geq4$, $n$ is even. Then each coordinate function of $\Phi_F$ is represented as $(\Phi_F)_i(x) = f_i(x) + \lambda_i\big(x_2\ldots x_n + x_1x_3\ldots x_n + \ldots + x_1x_2\ldots x_{n-1} +  x_1\ldots x_n\big)$, where $\deg(f_i)\leq n-2$ and $\lambda_i\in\mathbb{F}_2$.
\end{theorem}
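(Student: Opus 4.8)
The plan is to reduce the statement to a single parity count about the fibres of $\Phi_F$. First note that the polynomial
$$p(x):=x_2\cdots x_n+x_1x_3\cdots x_n+\cdots+x_1\cdots x_{n-1}+x_1\cdots x_n$$
is precisely the sum of all $n$ monomials of degree $n-1$ together with the unique monomial of degree $n$. Hence, for a Boolean function $g$ in $n$ variables, a representation $g=h+\lambda\,p$ with $\deg(h)\le n-2$ and $\lambda\in\mathbb{F}_2$ exists if and only if the $n+1$ ANF-coefficients of $g$ at the monomials $x_1\cdots x_n$ and $\prod_{k\ne j}x_k$ ($j=1,\dots,n$) are all equal; in that case $\lambda$ is their common value and $h:=g+\lambda p$. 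Thus it suffices to prove, for $g=(\Phi_F)_i$, that these $n+1$ top coefficients coincide.

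Next I would rewrite these coefficients using M\"obius inversion: the coefficient of $\prod_{k\in I}x_k$ in the ANF of $g$ equals the sum modulo $2$ of $g(y)$ over all $y$ with $y_k=0$ for every $k\notin I$. So the coefficient of $x_1\cdots x_n$ is $\bigoplus_{y\in\mathbb{F}_2^n}g(y)$, that of $\prod_{k\ne j}x_k$ is $\bigoplus_{y:\ y_j=0}g(y)$, and since $\bigoplus_{y}g(y)=\bigoplus_{y:\ y_j=0}g(y)+\bigoplus_{y:\ y_j=1}g(y)$ these agree (for each $j$) exactly when $\bigoplus_{y:\ y_j=1}g(y)=0$ for all $j$. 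Writing $(\Phi_F)_i(y)=e_i\cdot\Phi_F(y)$ with $e_i$ the $i$-th unit vector, it is enough to establish the coordinate-free identity
$$\sum_{y\in\mathbb{F}_2^n:\ y_j=1}\Phi_F(y)={\bf 0}\qquad\text{for every }j=1,\dots,n,$$
the sum taken in $\mathbb{F}_2^n$; this would even give the same conclusion for every component function $v\cdot\Phi_F$, not just the coordinate ones.

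To prove the identity I would split the sum by the value of $\Phi_F$. By Theorem~\ref{image}(2), for each nonzero $v$ in the image of $\Phi_F$ the set $A_v^F\cup\{{\bf 0}\}$ is a linear subspace of some even, hence $\ge 2$, dimension $\lambda_v$; the vectors $y$ with $\Phi_F(y)={\bf 0}$ (in particular $y={\bf 0}$) contribute ${\bf 0}$ to the sum. Therefore, modulo $2$,
$$\sum_{y:\ y_j=1}\Phi_F(y)=\sum_{v\ne{\bf 0}}\,|\{y\in A_v^F:\ y_j=1\}|\cdot v.$$
For a fixed nonzero $v$, the map $y\mapsto y_j$ is $\mathbb{F}_2$-linear on the subspace $A_v^F\cup\{{\bf 0}\}$, so $\{y\in A_v^F\cup\{{\bf 0}\}:\ y_j=1\}$ is either empty or a coset of a hyperplane of that subspace; since ${\bf 0}$ never lies in it, $|\{y\in A_v^F:\ y_j=1\}|$ is $0$ or $2^{\lambda_v-1}$, and in both cases it is even because $\lambda_v\ge 2$. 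Hence every summand is ${\bf 0}$ modulo $2$, which proves the identity and with it the theorem.

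In this plan the manipulations with $p$ and M\"obius inversion are routine bookkeeping; the genuinely load-bearing step is the final parity count, and this is exactly where evenness of $n$ enters, since it is what forces $\lambda_v\ge 2$. For odd $n$, $\Phi_F$ is a permutation and no analogous coupling of the degree-$(n-1)$ and degree-$n$ parts of $(\Phi_F)_i$ is forced — consistent with the flat bound $\deg(\Phi_F)\le n-2$ of Theorem~\ref{degree-odd}.
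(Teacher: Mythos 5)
Your proof is correct, and it reaches the theorem by a genuinely different route than the paper. The paper first observes that $\Phi_{F+L}=\Phi_F$ and $\varphi_{F+L}=\varphi_F+L\cdot\Phi_F$ for linear $L$, invokes Proposition~\ref{deg_phi} (which itself relies on the linearity of $\varphi_F$ on each fibre $A_v^F$) to conclude $\deg(L\cdot\Phi_F)<n$ for every linear $L$, and then runs a case analysis on the coefficients of the degree-$n$ and degree-$(n-1)$ monomials of $(\Phi_F)_i$ using $L=(0,\ldots,0,x_j,0,\ldots,0)$. Note that the paper's condition ``$x_j(\Phi_F)_i$ has degree $<n$'' is, by M\"obius inversion, exactly your parity condition $\bigoplus_{y:\,y_j=1}(\Phi_F)_i(y)=0$; so the two arguments converge on the same intermediate claim but certify it differently. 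You prove it directly from Theorem~\ref{image}(2): the fibres $A_v^F\cup\{{\bf 0}\}$ are subspaces of even dimension $\geq 2$, so each fibre meets the affine hyperplane $\{y_j=1\}$ in an even number of points, and the sum $\sum_{y:\,y_j=1}\Phi_F(y)$ vanishes in $\mathbb{F}_2^n$. This buys you a shorter, more self-contained argument that bypasses $\varphi_F$ and Proposition~\ref{deg_phi} entirely, and it delivers the equality of all $n+1$ top ANF coefficients in one stroke rather than through a two-case contradiction; what it gives up is the paper's explicit link between the degree of $\varphi_F$ and the top-degree structure of $\Phi_F$, which is of independent interest in the rest of the paper. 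One small presentational point: your final identity is stated for every component $v\cdot\Phi_F$, which is no stronger in substance (the coordinate version already implies the component version by linearity of the decomposition $g\mapsto(h,\lambda)$), but it is the cleaner way to phrase the conclusion.
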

\begin{proof}
Let $L:\mathbb{F}_2^n\to\mathbb{F}_2^n$ be a linear function. Then it is easy to see that
$$
\gamma_{F+L}(a,b) = \gamma_F(a, b + L(a)) = (b+L(a))\cdot \Phi_F(a) + \varphi_F(a) + 1 = b\cdot \Phi_F(a) + \varphi_F(a) + L(a) \cdot \Phi_F(a) + 1.
$$
Hence, $\Phi_{F+L} = \Phi_F$ and $\varphi_{F+L} = \varphi_F + L \cdot \Phi_F$.
By proposition~\ref{deg_phi}, $\deg(\varphi_F) = \deg(\varphi_{F+L}) = n$, since $F+L$ is also a quadratic APN function. Thus, $\deg(L\cdot \Phi_F) < n$ for any linear function~$L$.

\underline{Suppose that $\deg(\Phi_F) = n$.} This means that there exists a coordinate function $(\Phi_F)_i$ of degree $n$. Let us represent $$(\Phi_F)_i(x) = f_i(x) + a_1 x_2\ldots x_n + a_2 x_1x_3\ldots x_n + \ldots + a_n x_1x_2\ldots x_{n-1} + x_1\ldots x_n,$$ where $\deg(f_i)\leq n-2$ and $a_1,\ldots, a_n\in\mathbb{F}_2$.

\begin{itemize}
\item If $a_j = 0$, then $\deg(L\cdot \Phi_F) = n$ for $L = (0, \ldots, 0, x_j, 0, \ldots, 0)$, where $x_j$ is the $i$-th coordinate function of $L$. Hence, we get a contradiction.

\item If $a_j=1$ for all $j$, then it is easy to see that we will always have $\deg(L\cdot \Phi_F) < n$ for any linear function $L$.
\end{itemize}

\underline{Suppose that $\deg(\Phi_F) = n-1$.} Similarly,
$$(\Phi_F)_i(x) = f_i(x) + a_1 x_2\ldots x_n + a_2 x_1x_3\ldots x_n + \ldots + a_n x_1x_2\ldots x_{n-1},$$
where at least one coefficient is equal to 1, say $a_j$. Then $\deg(L\cdot \Phi_F) = n$ for $L = (0, \ldots, 0, x_j, 0, \ldots, 0)$, where $x_j$ is the $i$-th coordinate function of $L$. Hence, we get a contradiction.

Thus, $(\Phi_F)_i$ is of degree not more than $n-2$ or all monomials of degree $n-1$ and $n$ are included in the ANF of $(\Phi_F)_i$.
\end{proof}

\begin{remark} For all known quadratic APN functions of not more than 11 variables, we computationally verified that
\begin{itemize}
\item for even $n$, the case $\deg((\Phi_F)_i) = n$ is not realized;
\item any component function of $\Phi_F$ has degree exactly $n-2$.
\end{itemize}
\end{remark}

Based on our computational experiments we can formulate the following
\begin{conjecture}\label{deg}
Let $F$ be a quadratic APN function of $n$ variables, $n\geq3$. Then $\deg(v\cdot\Phi_F) = n-2$ for any nonzero $v\in\mathbb{F}_2^n$.
\end{conjecture}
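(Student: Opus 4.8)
\medskip
\noindent\emph{Proof strategy.} The value $n-2$ already meets the upper bounds of Theorems~\ref{degree-odd} and~\ref{degree-even}: for odd $n$ this is Theorem~\ref{degree-odd}, and for even $n$ it follows from Theorem~\ref{degree-even} once one notes that $\deg(v\cdot\Phi_F)$ can only be $\leqslant n-2$ or exactly $n$ (it equals $n$ precisely when an odd number of the relevant $\lambda_i$ equal $1$). So Conjecture~\ref{deg} is equivalent to the conjunction of two statements: \emph{(a)} for even $n$, all coefficients $\lambda_i$ of Theorem~\ref{degree-even} vanish; and \emph{(b)} for every $n\geqslant 3$ and every nonzero $v$, $\deg(v\cdot\Phi_F)\geqslant n-2$.

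\medskip
Both statements have convenient reformulations in terms of sums of $\Phi_F$ over coordinate subcubes, i.e.\ high-order coefficients of the ANF of $\Phi_F$. Since $\deg(f_i)\leqslant n-2$, the top coefficient $\sum_{x}(\Phi_F)_i(x)$ of $(\Phi_F)_i$ equals $\lambda_i$; and since every nonempty preimage $A_v^F$ has odd cardinality by Theorem~\ref{image}\,(2), one has $\sum_{x}\Phi_F(x)=\sum_{v\in{\rm Im}(\Phi_F)\setminus\{{\bf 0}\}}v$. Thus \emph{(a)} says exactly that \emph{the sum of the distinct nonzero values taken by $\Phi_F$ is ${\bf 0}$} — a statement about the image set of $\Phi_F$, automatic for odd $n$ where $\Phi_F$ is a permutation. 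Granting \emph{(a)} (and, for odd $n$, unconditionally), $\Phi_F$ has degree $\leqslant n-2$, and then \emph{(b)} is equivalent to the assertion that the $\binom{n}{2}$ vectors $\mathbf c_{p,q}:=\sum_{x:\,x_p=x_q=0}\Phi_F(x)\in\mathbb{F}_2^n$ span $\mathbb{F}_2^n$. I would aim to prove these two reformulations.

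\medskip
The technical device I would use for both is a Cramer-type description of $\Phi_F$. Replacing $F$ by its homogeneous quadratic part (harmless, since $\Phi_{F+L}=\Phi_F$ for linear $L$), the identity $(F(x)+F(x+a)+F(a))\cdot\Phi_F(a)=0$ read coefficientwise in $x$ is exactly $N(a)\,\Phi_F(a)={\bf 0}$, where $N(a)$ is the $n\times n$ matrix of linear forms in $a$ assembled from the symmetric bilinear forms of the coordinates of $F$; APN-ness is precisely ${\rm rank}\,N(a)=n-1$ for all $a\neq{\bf 0}$, so $\Phi_F(a)$ spans $\ker N(a)$. Since the unique nontrivial linear relation among the rows of $N(a)$ is the one indexed by $a$, expanding $N(a)\cdot\mathrm{adj}\,N(a)={\bf 0}$ yields
$$\big(\mathrm{adj}\,N(a)\big)_{i,\ell}=a_\ell\cdot(\Phi_F)_i(a)\qquad\text{for all }i,\ell\text{ and all }a,$$
so that every $a_\ell(\Phi_F)_i(a)$ is an $(n-1)\times(n-1)$ minor of a matrix of linear forms (which already re-derives Theorem~\ref{degree-even}). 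Via this identity, statement \emph{(b)} becomes: for each $i$, at least one of these minors is genuinely of degree $n-1$, i.e.\ has nonzero top-degree part; and statement \emph{(a)} becomes a vanishing of the weight-$n$ coefficient of these minors. I would try to extract the required (non)vanishing from the hypothesis that the rank of $N(a)$ is \emph{exactly} $n-1$ on all of $\mathbb{F}_2^n\setminus\{{\bf 0}\}$.

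\medskip
The main obstacle — and the reason the statement is still only conjectural — is uniformity over all quadratic APN $F$: both the vanishing of the sum of the values of $\Phi_F$ in \emph{(a)} and the spanning property of the vectors $\mathbf c_{p,q}$ in \emph{(b)} are non-degeneracy properties of the bilinear data of $F$ that hold in every known example but for which I see no argument valid for all quadratic APN functions. Isolating these two properties is exactly the announced reduction of Conjecture~\ref{deg} to two auxiliary conjectures of independent interest.
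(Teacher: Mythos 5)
First, a framing point: the statement you were asked to prove is a \emph{conjecture}; the paper does not prove it either, but (for odd $n$) reduces it to two further conjectures (Conjectures~\ref{sets} and~\ref{subspace}) of independent interest. Your proposal is likewise an honest reduction rather than a proof, and you correctly flag this at the end, so there is no hidden gap beyond the one everybody acknowledges.

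As for the content: your reductions are correct, and part of your route coincides with the paper's. Your statement \emph{(b)}, reformulated as ``the vectors $\mathbf c_{p,q}=\sum_{x:\,x_p=x_q=0}\Phi_F(x)$ span $\mathbb{F}_2^n$'', is exactly the paper's Steps~1--2: the degree-$(n-2)$ ANF coefficient of $v\cdot\Phi_F$ attached to the pair $\{i,j\}$ equals $v\cdot\mathbf c_{i,j}=|M^{ij}_0|\bmod 2$, which the paper reaches by the Walsh-coefficient formula~(\ref{ANF}) and you reach by a direct subcube sum --- the same quantity. Your statement \emph{(a)} (vanishing of the $\lambda_i$, equivalently of the sum of the distinct nonzero values of $\Phi_F$, using that each nonempty preimage has odd size) is a correct and rather clean reformulation of the even-$n$ obstruction that the paper explicitly sets aside; this is a genuine small addition. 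Where you diverge is after the common reformulation: the paper exploits the decomposition of $\{x\mid v\cdot\Phi_F(x)=0\}$ into $2$-dimensional subspaces pairwise meeting in $\{\mathbf 0\}$ (Prop.~7 of \cite{19-gorodilova}) to turn the parity of $|M^{ij}_0|$ into the parity of a count $N^{ij}_0$ of subspaces, and then isolates Conjectures~\ref{sets} and~\ref{subspace}; you instead propose an adjugate identity $(\mathrm{adj}\,N(a))_{i,\ell}=a_\ell(\Phi_F)_i(a)$ for the matrix of linear forms $N(a)$ with $\ker N(a)=\langle\Phi_F(a)\rangle$. That identity is correct (the adjugate has rank one with column space $\ker N(a)$ and row space $\langle a\rangle^{\!\top}$), and it does re-derive Theorem~\ref{degree-even}; but it leads to a different pair of open sub-problems than the paper's, and you do not reach the paper's specific auxiliary conjectures. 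Neither route closes the problem; yours is sound as far as it goes and genuinely different in its second half.
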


\bigskip
\section{Does the equality $\deg(\Phi_F) = n-2$ hold?}

In this section we study the following question: ``Is conjecture~\ref{deg} true or not?''.

For example, consider an APN Gold function $F(x) = x^{2^k+1}$, $\gcd(n,k)=1$ (the function is given as a function over the finite field of order $2^n$). Its associated Boolean function is known \cite{98-ccz}: $\gamma_F(a,b) = tr((a^{2^k+1})^{-1}b) + tr(1) + 1$ (here $tr$ is the absolute trace function in the finite field of order $2^n$). So, we have $\Phi_F(a) = (a^{2^k+1})^{-1}$, $\Phi_F(0)=0$, and as it is easy to see $\deg(\Phi_F) = n-2$ (since it is well-known that the degree of a function $F(x) = x^d$ is equal to the 2-weight of the integer $d$ modulo $2^n$).

We wonder whether conjecture~\ref{deg} is true or not for arbitrary $n$. Let us focus on the case of odd $n$ since in this case we have the bound of theorem~\ref{degree-odd}. For even case, the consideration could be rather similar but with assumption that $\deg(\Phi_F)$ is not equal to $n$, that is only a conjecture up to now.

{\bf Step 1.} Let $F$ be a quadratic APN function of $n$ variables, $n$ is odd, $n\geq5$; $v$ be a nonzero vector from $\mathbb{F}_2^n$. We need to prove that $\deg(v\cdot\Phi_F) = n-2$ for any nonzero $v\in\mathbb{F}_2^n$.

We use the following widely known equality for counting the ANF coefficients of a Boolean function $f$ of $n$ variables:
\begin{equation}\label{ANF}
g_{f}(a) = \Big( 2^{{\rm wt}(a) - 1} - 2^{ {\rm wt}(a)-n-1 } \sum_{b\preceq(a+ {\bf 1})} W_f(b)\Big)\ {\rm mod}\ 2.
\end{equation}
 We need to show now that there exists a vector $a^v$ with ${\rm wt}(a^v)=n-2$  such that $g_{v \cdot \Phi_F}(a^v) = 1$. Equivalently, that there exist coordinates $i, j$, $1\leqslant i\neq j\leqslant n$, such that $$\sum_{b\preceq(a^v+ {\bf 1})} W_{v \cdot \Phi_F}(b) = W_{v \cdot \Phi_F}({\bf 0}) + W_{v \cdot \Phi_F}(e^i) + W_{v \cdot \Phi_F}(e^j) + W_{v \cdot \Phi_F}(e^i + e^j)$$ is not divided by 16 according to (\ref{ANF}). Here $e^i$ is the vector with 1 in the $i$-th coordinate and 0s in other coordinates. Let us introduce the following sets:

$$M^i = \{x\in\mathbb{F}_2^n\ |\ v \cdot \Phi_F(x) = 0,\  x\cdot e^i = 0\};$$
$$M^j = \{x\in\mathbb{F}_2^n\ |\ v \cdot \Phi_F(x) = 0,\  x\cdot e^j = 0\};$$
$$M^{ij} = \{x\in\mathbb{F}_2^n\ |\ v \cdot \Phi_F(x) = 0,\  x\cdot (e^i + e^j) = 0\}.$$

Then, we have
$$\sum_{b\preceq(a^v+ {\bf 1})} W_{v \cdot \Phi_F}(b) = 4 |M^i| - 2^n + 4 |M^j| - 2^n + 4 |M^{ij}| - 2^n = 4 (|M^i| + |M^j| + |M^{ij}|) - 3\cdot2^n$$
$$= 4 ( 2^{n-1} + 2|M^{ij}_{0}|) - 3\cdot2^n = 8|M^{ij}_{0}| - 2^{n-1},$$
where $$M^{ij}_0 = \{x\in\mathbb{F}_2^n\ |\ v \cdot \Phi_F(x) = 0,\  x\cdot e^i = 0,\ x\cdot e^j = 0\}.$$

{\bf Step 2.} Thus, we need to prove that there exist coordinates $i, j$, $1\leqslant i\neq j\leqslant n$, such that $|M^{ij}_{0}|$ is odd (since we consider $n\geq 5$).
From \cite{19-gorodilova} (prop. 7), we know that $M = \{x\in\mathbb{F}_2^n\ |\ v\cdot \Phi_F(x) = 0\} = \bigcup_{{\ell}\in I} A_{\ell}$, where $A_{\ell}$ is a linear subspace of dimension 2, and $A_{\ell}\cap A_k = \{{\bf 0}\}$, ${\ell},k\in I$, ${\ell}\neq k$. Since $\Phi_F$ is a permutation, then $|M|=2^{n-1}$ and $|I| = (2^{n-1}-1)/3$.

Let us consider an arbitrary $A_{\ell} = \{{\bf 0}, x^{\ell}, y^{\ell}, x^{\ell} + y^{\ell}\}$.
Then for any distinct coordinates $i,j$ of $x^{\ell}, y^{\ell}, x^{\ell} + y^{\ell}$ we have the following situations (without permutations of rows):
\begin{center}
\begin{tabular}{cccccccccc}
&                 $ij$ &    & $ij$ & & $ij$  & & $ij$  & & $ij$  \\
$x^{\ell}$        &   00 &    & 00 &    & 00 &    & 00 &    & 01 \\
$y^{\ell}$        &   00 & or & 01 & or & 10 & or & 11 & or & 10 \\
$x^{\ell} + y^{\ell}$ & 00 &    & 01 &    & 10 &    & 11 &    & 11 \\
\end{tabular}
\end{center}
Hence, the number of $x^{\ell}, y^{\ell}, x^{\ell} + y^{\ell}$ together with ${\bf 0}$ that belong to the set $M^{ij}_0$ is equal to  $1 + 3\cdot N^{ij}_3 + 1\cdot N^{ij}_1 + 0\cdot N^{ij}_0$, where $N^{ij}_3 + N^{ij}_1 + N^{ij}_0 = |I| = (2^{n-1}-1)/3$, and $N^{ij}_k$, $k=0,1,3$, is the number of $A_{\ell}$, $\ell\in I$, having exactly $k$ vectors with both coordinates $i$ and $j$ equal to~$0$.

Thus, $|M^{ij}_0|$ is odd if and only if $N^{ij}_0$ is odd.

{\bf Step 3.} Now, we need to prove that there exist coordinates $i, j$, $1\leqslant i\neq j\leqslant n$, such that $N^{ij}_0$ is odd. We found the following interesting property (computationally verified for $n=5$) that we formulate as a conjecture.

\begin{conjecture}\label{sets}
Let $M = \bigcup_{{\ell}\in I} A_{\ell}$, where $A_{\ell}$ is a linear subspace of dimension 2, and $A_{\ell}\cap A_k = \{{\bf 0}\}$, ${\ell},k\in I$, ${\ell}\neq k$, $|I| = (2^{n-1}-1)/3$. Then the set $M$ is a hyperplane $\{x\in\mathbb{F}_2^n\ |\ x_m = 0\}$ for some coordinate $m$ if and only if the number of subspaces $A_{\ell}$ without elements having both coordinates $i$ and $j$ equal to $0$ is even for any distinct coordinates $i,j$.
\end{conjecture}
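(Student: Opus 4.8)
\medskip
\noindent\textbf{Proof proposal.}
The plan is first to trade the condition on the subspaces $A_\ell$ for an equivalent one about the set $M$ alone. Fix distinct coordinates $i,j$ and a $2$-dimensional subspace $A_\ell$, and let $\pi_{ij}(x)=(x_i,x_j)$; it restricts to an $\mathbb{F}_2$-linear map $A_\ell\to\mathbb{F}_2^2$ with image a subspace $V_\ell$ of $\mathbb{F}_2^2$. Then $A_\ell$ has no nonzero element with $x_i=x_j=0$ precisely when $V_\ell=\mathbb{F}_2^2$, and in that case $A_\ell$ has exactly one nonzero element with $x_i=x_j=1$; if $V_\ell=\langle(1,1)\rangle$ it has exactly two such, and in the remaining two cases it has none. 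As the sets $A_\ell\setminus\{{\bf 0}\}$ partition $M\setminus\{{\bf 0}\}$, summing over $\ell\in I$ gives
$$
\big|\{x\in M:\ x_i=x_j=1\}\big|\ =\ N^{ij}_0\ +\ 2\cdot\big|\{\ell\in I:\ V_\ell=\langle(1,1)\rangle\}\big|,
$$
so $N^{ij}_0$ and $\big|\{x\in M:\ x_i=x_j=1\}\big|$ have the same parity. Hence it suffices to prove, for $n\geq5$ (as in the setting here; note $n$ is necessarily odd for $|I|$ to be an integer, and $|M|=1+3|I|=2^{n-1}$), that $M=\{x:x_m=0\}$ for some coordinate $m$ if and only if $\big|\{x\in M:\ x_i=x_j=1\}\big|$ is even for all distinct $i,j$.

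For the ``only if'' direction, suppose $M=\{x:x_m=0\}$. Then $\big|\{x\in M:\ x_i=x_j=1\}\big|$ equals $0$ when $m\in\{i,j\}$ and $2^{n-3}$ when $m\notin\{i,j\}$; both are even since $n\geq5$, so by the parity identity above every $N^{ij}_0$ is even.

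The ``if'' direction is the heart of the statement, and the step I expect to be the main obstacle. The natural tool is a Walsh-type identity that uses the partition. Put $S_u=\sum_{x\in M}(-1)^{u\cdot x}$ and $k_u=\big|\{\ell\in I:\ A_\ell\subseteq u^{\perp}\}\big|$, where $u^{\perp}=\{x\in\mathbb{F}_2^n:\ u\cdot x=0\}$. From $M=\{{\bf 0}\}\sqcup\bigsqcup_{\ell\in I}(A_\ell\setminus\{{\bf 0}\})$ one gets $S_u=4k_u-(|I|-1)$, and, using $|M|=2^{n-1}$ together with the identity $4|I|=|M|+|I|-1$, one obtains
$$
\big|\{x\in M:\ x_i=x_j=1\}\big|\ =\ |I|-k_{e^i}-k_{e^j}+k_{e^i+e^j}.
$$
Since $|I|=(2^{n-1}-1)/3$ is odd, the hypothesis is equivalent to ``$k_{e^i}+k_{e^j}+k_{e^i+e^j}$ is odd for every pair $i\neq j$'', while the conclusion $M=\{x:x_m=0\}$ is equivalent to $k_{e^m}=|I|$ for some $m$. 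The obstacle is that these parity relations are, on their own, too weak: they are satisfied by \emph{every} hyperplane $c^{\perp}$, and for odd $n$ each hyperplane is a union of a partial spread of the required size $|I|$ (regard $c^{\perp}$ as an $\mathbb{F}_4$-vector space of dimension $(n-1)/2$ and take its $\mathbb{F}_4$-lines). In particular the all-$1$s hyperplane $\{x:x_1+\dots+x_n=0\}$, which is not of the form $\{x:x_m=0\}$, satisfies the right-hand side; so the converse cannot be derived from the partial-spread hypothesis alone. It must be either weakened to ``$M$ is a hyperplane'', or supplemented by the fact that $M=\{x:v\cdot\Phi_F(x)=0\}$ comes from a quadratic APN function, using the concrete structure of the $A_\ell$ from \cite{19-gorodilova} to rule out the non-coordinate hyperplanes. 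Granting such an extra hypothesis, I would argue by contradiction, assuming $k_{e^m}<|I|$ for all $m$, and combine the $\binom{n}{2}$ parity relations with the structural constraints on the $A_\ell$ to force an impossible configuration; this last bit of linear-algebraic bookkeeping is where I expect the genuine difficulty.
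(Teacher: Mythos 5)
The first thing to note is that the paper contains no proof of this statement: it is Conjecture~\ref{sets}, offered with only a computational check for $n=5$ and explicitly declared open in the last sentence of Section~3. So there is no argument of record to compare yours against, and your proposal must be judged on its own. On those terms, everything you actually prove is correct, and the most important thing you establish is that the conjecture as stated is \emph{false}. Your parity identity is sound: the sets $A_\ell\setminus\{{\bf 0}\}$ partition $M\setminus\{{\bf 0}\}$, and the case analysis of the projection $x\mapsto(x_i,x_j)$ restricted to a $2$-dimensional subspace shows that $A_\ell$ contributes exactly one element with $x_i=x_j=1$ precisely when it is counted by $N^{ij}_0$, exactly two such elements when the image is $\langle(1,1)\rangle$, and none otherwise; hence $N^{ij}_0\equiv\big|\{x\in M:\ x_i=x_j=1\}\big|\pmod 2$. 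This shows the parity condition depends only on $M$ and not on the chosen decomposition, and it gives the ``only if'' direction immediately for $n\geq5$, since $0$ and $2^{n-3}$ are then both even.

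The decisive point is your counterexample to the ``if'' direction. Since $n$ must be odd for $|I|=(2^{n-1}-1)/3$ to be an integer, the hyperplane $H=\{x:\ x_1+\dots+x_n=0\}$ has even dimension $n-1$ and admits a line spread (identify $H$ with $\mathbb{F}_4^{(n-1)/2}$ and take its one-dimensional $\mathbb{F}_4$-subspaces); this is a decomposition of exactly the required kind, with the required number of $2$-dimensional subspaces meeting pairwise in $\{{\bf 0}\}$. By your parity identity, $N^{ij}_0\equiv\big|\{x\in H:\ x_i=x_j=1\}\big|=2^{n-3}\equiv0\pmod 2$ for every pair $i\neq j$ once $n\geq5$, yet $H$ is not of the form $\{x:\ x_m=0\}$. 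So the right-hand side of the claimed equivalence holds while the left-hand side fails, for every odd $n\geq5$; in particular the reported verification for $n=5$ cannot have covered this configuration. This also undermines Step~4 of the paper's programme as written: even granting Conjecture~\ref{subspace}, one cannot deduce from Conjecture~\ref{sets} that some $N^{ij}_0$ is odd, because the equivalence Step~4 relies on is not true. What you have not done is prove any repaired statement --- your closing paragraph is a plan, not an argument, and the remaining task (showing that the specific set $M=\{x:\ v\cdot\Phi_F(x)=0\}$ arising from a quadratic APN function has some odd $N^{ij}_0$, equivalently by your identity some odd $\big|\{x\in M:\ x_i=x_j=1\}\big|$) is untouched and, as you correctly observe, cannot follow from the partial-spread hypothesis alone. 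But as a response to the literal statement, your analysis resolves it negatively, which is more than the paper offers.
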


{\bf Step 4.} If conjecture~\ref{sets} is true, then we need to prove that $M = \{x\in\mathbb{F}_2^n\ |\ v\cdot \Phi_F(x) = 0\}$ cannot be a hyperplane $\{x\in\mathbb{F}_2^n\ |\ x_m = 0\}$ for some coordinate $m$.

\begin{conjecture}\label{subspace}
Let $F$ be a quadratic APN function in $n$ variables, $n\geq5$. Then $\{x\in\mathbb{F}_2^n\ |\ v\cdot \Phi_F(x) = 0\}$ is not a linear subspace.
\end{conjecture}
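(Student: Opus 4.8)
\medskip
\noindent\textbf{A plan for attacking Conjecture~\ref{subspace}.}
For odd $n$ I would first reduce the claim to: \emph{no component $v\cdot\Phi_F$ is $\mathbb{F}_2$-linear}. Since $\Phi_F$ is a permutation (Theorem~\ref{image}), $v\cdot\Phi_F$ is balanced, so $M_v=\{x:\ v\cdot\Phi_F(x)=0\}$ has $2^{n-1}$ elements and contains ${\bf 0}$; were $M_v$ a subspace it would be a linear hyperplane $\{x:\ w\cdot x=0\}$ with $w\neq{\bf 0}$, and then $v\cdot\Phi_F$ and $w\cdot x$, having the same zero set, would coincide. Thus it suffices to assume $v\cdot\Phi_F(x)=w\cdot x$ for all $x$ and to derive a contradiction when $n\geq5$. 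For even $n$ the same reduction needs, in addition, $|M_v|=2^{n-1}$ and $\deg(\Phi_F)\leq n-2$ (cf.\ Conjecture~\ref{deg}), so I would concentrate on odd $n$.

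Next I would move to bilinear forms. Take $F({\bf 0})={\bf 0}$ and set $B_u(x,y)=u\cdot\bigl(F(x+y)+F(x)+F(y)\bigr)$; each $B_u$ is alternating, and by the definition of $\Phi_F$ its radical is $R_u=\{{\bf 0},\Phi_F^{-1}(u)\}$ for $u\neq{\bf 0}$, so $\dim R_u=1$ by Theorem~\ref{image} (this is how the APN property enters, cf.\ \cite{19-gorodilova}). Hence $u\mapsto B_u$ is a linear bijection onto an $n$-dimensional space $\mathcal{B}$ of alternating forms whose nonzero members all have corank~$1$. Under the assumption above, $\Phi_F^{-1}$ maps $v^{\perp}$ bijectively onto $H:=w^{\perp}$, i.e.\ $R_u\subseteq H$ exactly when $u\in v^{\perp}$; so $\mathcal{B}_0:=\{B_u:\ u\in v^{\perp}\}$ is an $(n-1)$-dimensional subspace of $\mathcal{B}$ whose nonzero members have their (one-dimensional) radical inside the single hyperplane $H$, those radicals running over all $2^{n-1}-1$ lines of $H$.

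I would then try to rule out such a configuration. Restricting every $B\in\mathcal{B}_0$ to $H$: the radical of $B|_H$ in $H$ contains $R_B$, hence is nonzero, and since $\dim H=n-1$ is even it has dimension $\geq2$; and $B|_H\neq0$ for $B\neq0$ because $\dim R_B=1<n-1$. So $\mathcal{B}_0|_H$ would be an $(n-1)$-dimensional space of alternating forms on $\mathbb{F}_2^{n-1}$ all of whose nonzero members have corank $\geq2$. For $n=5$ this finishes the proof immediately: the degenerate alternating forms on $\mathbb{F}_2^{4}$ form the Klein quadric $Q^{+}(5,2)$ in $PG(5,2)$, whose largest linear subspaces have vector dimension $3<4$, so a $4$-dimensional space of rank-$\leq2$ forms on $\mathbb{F}_2^{4}$ cannot exist. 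For odd $n\geq7$ this corank bound alone is too weak (already the forms sharing one radical vector fill a large such space), and one must use in addition that the radicals of $\mathcal{B}_0$ exhaust $H\setminus\{{\bf 0}\}$ and that $\mathcal{B}_0$ sits inside the full corank-$1$ space $\mathcal{B}$ of a genuine quadratic APN function.

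The hard part --- and the reason the statement is posed as a conjecture --- is exactly this step for odd $n\geq7$. The route I would try is the Walsh domain: for odd $n$ and nonzero $d$ one gets, after a short computation, $W_{F_d}(c)^{2}=2^{n}\bigl(1+(-1)^{\Phi_F^{-1}(d)\cdot c+\varphi_F(\Phi_F^{-1}(d))}\bigr)$, so $W_{F_d}$ is supported on an affine hyperplane with normal $\Phi_F^{-1}(d)$. Under our hypothesis all these normals, for $d\in v^{\perp}$, lie in the fixed hyperplane $H$; plugging this into summation identities over $d$ --- the Parseval-type relations for $\sum_{d}W_{F_d}(c)^{2}$, or the known second-order constraints satisfied by quadratic APN functions --- should over-determine the Walsh spectrum of $F$ and force $n$ to be small. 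Carrying this last estimate through uniformly in $n$ is where I expect the genuine obstacle to lie.
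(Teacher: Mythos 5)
First, note that the statement you are addressing is not proved in the paper at all: it is stated as Conjecture~\ref{subspace}, supported only by computational verification for all \emph{known} quadratic APN functions with $n=5,\dots,11$, and the paper explicitly says it remains open. So there is no proof of record to compare against; the question is whether your plan closes the gap, and it does not. Your own text concedes the decisive point: for odd $n\geq 7$ the corank argument is, as you say, ``too weak,'' and the proposed Walsh-domain route (whose identity $W_{F_d}(c)^2=2^n\bigl(1+(-1)^{\Phi_F^{-1}(d)\cdot c+\varphi_F(\Phi_F^{-1}(d))}\bigr)$ is correct for odd $n$) is never carried through to a contradiction. The even case is likewise set aside, since there $\Phi_F$ is not a permutation and your reduction to ``$v\cdot\Phi_F$ is linear'' already needs the unproved Conjecture~\ref{deg} and a cardinality hypothesis. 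So the conjecture is not established by this proposal.

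That said, the part you do carry out is sound and goes beyond what the paper offers. The reduction (for odd $n$) from ``$M_v$ is a subspace'' to ``$v\cdot\Phi_F(x)=w\cdot x$'' is correct, since $v\cdot\Phi_F$ is balanced and two Boolean functions with the same zero set coincide. The reformulation via the alternating forms $B_u(x,y)=u\cdot(F(x+y)+F(x)+F(y))$ is also correct: by the identity $(F(x)+F(x+a)+F(a)+F(\mathbf 0))\cdot\Phi_F(a)=0$ and the APN property, $\mathrm{rad}(B_u)=\{\mathbf 0,\Phi_F^{-1}(u)\}$ for $u\neq\mathbf 0$, so $\mathcal B$ is an $n$-dimensional space of corank-$1$ alternating forms and, under your hypothesis, $\mathcal B_0|_H$ is an $(n-1)$-dimensional space of forms on $\mathbb F_2^{n-1}$ of corank at least $2$ (injectivity of restriction holds because a corank-$1$ form on $\mathbb F_2^n$ has no totally isotropic subspace of dimension $n-1$ when $n\geq 5$). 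For $n=5$ this yields a $4$-dimensional linear space of rank-$2$ alternating forms on $\mathbb F_2^4$, i.e.\ a $PG(3,2)$ inside the Klein quadric $Q^+(5,2)$, which is impossible since its generators are planes; this is a genuine, non-computational proof of the conjecture for $n=5$, which the paper does not have. But as a proof of the statement as posed ($n\geq 5$), the proposal has a real gap at exactly the step you identify, and the conjecture remains open.
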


We computationally verified this property for all known quadratic APN functions for $n=5,\ldots,11$ and formulate the conjecture.

Thus, by proving conjectures~\ref{sets} and \ref{subspace}, we can prove the starting conjecture~\ref{deg}. Unfortunately, each of them remains open up to now.

\bigskip
\section*{Conclusion}

The following question is open: what properties must a Boolean function satisfy in order to be the associated function for some vectorial function? Even a partial answer to the question provides a potential method to find new APN functions if we can choose ``admissible'' Boolean functions as $\gamma_F$. For example, using the algorithm from \cite{19-boura} for reconstructing APN functions from its associated functions. Another reason why we study the properties of associated functions is that they may lead to new results in the differential equivalence classification of APN functions.

\bigskip

\end{document}